\newtheorem{theorem}{Theorem}
\newtheorem{proposition}[theorem]{Proposition}
\newtheorem{corollary}[theorem]{Corollary}
\newcommand{\dist}{\mathrm{dist}}
\author{B\'ela Csaba \and Judit Nagy-Gy\"orgy\thanks{Supported by J\'anos Bolyai Research Scholarship of the Hungarian Academy of Sciences.}}
\title{On the Advice Complexity of Online Matching on the Line\thanks{This study was supported by the project TKP2021-NVA-09. Project no TKP2021-NVA-09 has been implemented with the support provided  by the Ministry of Culture and Innovation of Hungary from the National Research, Development and Innovation Fund, financed under the TKP2021-NVA funding scheme.}}
\affiliation{University of Szeged, Szeged, Hungary}
\keywords{Computer Science - Data Structures and Algorithms}
\begin{document}

\publicationdata{vol. 26:3}{2024}{21}{10.46298/dmtcs.14125}
{2024-08-28; 2024-08-28; 2024-11-08}{2024-11-18}

\maketitle

\begin{abstract}
   We consider the matching problem on the line with advice complexity. We give a 1-competitive online algorithm with advice complexity \begin{math}n-1\end{math}, and show that there is no 1-competitive online algorithm reading less than \begin{math}n-1\end{math} bits of advice. Moreover, for each \begin{math}0<k<n\end{math} we present a \begin{math}c(n/k)\end{math}-competitive online algorithm with advice complexity \begin{math}O(k(\log N + \log n))\end{math} where \begin{math}n\end{math} is the number of servers, \begin{math}N\end{math} is the distance of the minimal and maximal servers, and \begin{math}c(n)\end{math} is the complexity of the best online algorithm without advice.
\end{abstract}

\section{Introduction}

In the problem of online minimum matching (also known as online weighted matching), defined by \cite{KP} and \cite{KMV}, \begin{math}n\end{math} points of a metric space \begin{math}s_1,\ldots,s_n,\end{math} called \emph{servers}, are given. The \emph{requests} \begin{math}r_1,\ldots,r_n\end{math} are the elements of the metric space as well, arriving one by one according to their indices, and upon arrival each one has to be matched to an unmatched server at a cost equal to their distance.
The goal is to minimize the total cost.   
We say that \begin{math}n\end{math} is the size of the input.

In this paper, the metric space is the real line. We can assume without loss of the generality that \begin{math}s_1\le\ldots\le s_n\end{math}. Observe that each matching corresponds to a permutation.

Competitive analysis is commonly used to evaluate the performance of online algorithms. Here we consider a minimization problem. Algorithm \begin{math}\mathrm{A}\end{math} is \begin{math}c\end{math}-competitive if the cost of \begin{math}\mathrm{A}\end{math} is at most \begin{math}c\end{math} times larger than the optimal cost (plus some constant). For randomized algorithms, the expected value of the cost is compared to the optimum.
The competitive ratio of algorithm \begin{math}\mathrm{A}\end{math} is the minimal \begin{math}c\end{math} for which \begin{math}\mathrm{A}\end{math} is \begin{math}c\end{math}-competitive. Let us mention that this notion is also called \emph{weak competitive ratio} by some researchers, since it includes an additive constant.

There are no tight bounds for the competitive ratio on the line. 
\cite{ABNPS} have developed an \begin{math}O(n^{\log_2 (3+\epsilon)-1}/\epsilon)\end{math}-competitive deterministic algorithm. 
An \begin{math}O(\log n)\end{math}-competitive randomized algorithm is presented by \cite{GL}. \cite{PS} have given a lower bound \begin{math}\Omega(\sqrt{\log n})\end{math} for the competitive ratio of randomized algorithms. 

The term advice complexity for online algorithms was introduced by \cite{DKP}. In this paper we work in the \emph{tape model} introduced by \cite{BKKKM}. In this model, the online algorithm may read an infinite advice tape written by the oracle, and the advice complexity is the number of bits read.
The question is the following: \textit{How many bits of advice are necessary and sufficient to achieve a competitive ratio \begin{math}c\end{math}?} This includes determining the number of bits an optimal algorithm needs (in case \begin{math}c=1\end{math}). 
Advice complexity has been investigated for many online problems, see the survey paper \cite{BFKLM} for more information.

We note that the results of \cite{Mik} and \cite{PS} imply that no online algorithm with sublinear advice complexity can be \begin{math}O(1)\end{math}-competitive. 

In the next section we prove that the advice complexity of the best 1-competitive matching algorithm is \begin{math}n-1\end{math}. In Section~\ref{divide} we present a family of online matching algorithms. The algorithm, indexed by a positive integer \begin{math}k,\end{math} reads \begin{math}O(k(\log N + \log n))\end{math} bits of advice and has competitive ratio \begin{math}c(n/k)\end{math} if it uses a \begin{math}c(n)\end{math}-competitive matching algorithm as a subroutine. This shows how increasing the number of advice bits help to decrease the competitive ratio.

\section{The advice complexity of 1-competitive online algorithms}

In this section we give matching upper and lower bounds for the advice complexity of 1-competitive online algorithms.
Let us begin with two folklore results on the structure of any optimal matching on the line. For completeness we present the proofs.

\begin{proposition}\label{order}
	Consider an optimal matching corresponding to permutation \begin{math}\pi\end{math} (\textit{i.e.} \begin{math}r_i\end{math} is matched to \begin{math}s_{\pi(i)}\end{math}). 
	\begin{itemize}
		\item If \begin{math}r_i\le s_{\pi(j)}< s_{\pi(i)}\end{math} for some \begin{math}i,j\end{math} then \begin{math}r_j\le s_{\pi(j)}\end{math}. 
		\item If \begin{math}r_i\ge s_{\pi(j)}>s_{\pi(i)}\end{math} for some \begin{math}i,j\end{math} then \begin{math}r_j\ge s_{\pi(j)}\end{math}. 
	\end{itemize}
\end{proposition}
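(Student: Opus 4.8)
The plan is to prove each statement by a standard exchange argument, showing that if the conclusion failed, we could swap two matched pairs to strictly reduce the total cost, contradicting optimality. I would focus on the first statement; the second follows by the symmetric argument (reflecting the line, which swaps the roles of $\le$ and $\ge$).

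First I would set up the hypothesis of the first claim: we have indices $i,j$ with $r_i \le s_{\pi(j)} < s_{\pi(i)}$, and I want to derive $r_j \le s_{\pi(j)}$. I would argue by contradiction, assuming instead that $r_j > s_{\pi(j)}$. The idea is to consider the alternative matching $\pi'$ obtained from $\pi$ by exchanging the two assignments, so that $r_i$ is matched to $s_{\pi(j)}$ and $r_j$ is matched to $s_{\pi(i)}$, leaving all other pairs unchanged. Since $\pi$ is optimal, the cost change must be nonnegative, i.e.
\begin{equation}
|r_i - s_{\pi(j)}| + |r_j - s_{\pi(i)}| \;\ge\; |r_i - s_{\pi(i)}| + |r_j - s_{\pi(j)}|.
\end{equation}
The main work is then to show that under the combined assumptions $r_i \le s_{\pi(j)} < s_{\pi(i)}$ and (for contradiction) $r_j > s_{\pi(j)}$, the swap in fact strictly decreases the cost, contradicting this inequality.

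The key step is the case analysis on where $r_j$ sits relative to the two server positions $s_{\pi(j)} < s_{\pi(i)}$. Because $r_i \le s_{\pi(j)}$, the term $|r_i - s_{\pi(i)}| - |r_i - s_{\pi(j)}|$ equals exactly $s_{\pi(i)} - s_{\pi(j)}$, the full gap between the two servers, since $r_i$ lies at or to the left of both. So the net cost change of the swap is
\begin{equation}
\bigl(|r_j - s_{\pi(i)}| - |r_j - s_{\pi(j)}|\bigr) - \bigl(s_{\pi(i)} - s_{\pi(j)}\bigr),
\end{equation}
and I must show this is negative. When $s_{\pi(j)} < r_j$, the bracketed difference involving $r_j$ is at most $s_{\pi(i)} - s_{\pi(j)}$ and is strictly less than it whenever $r_j > s_{\pi(j)}$, making the whole expression negative; this is precisely where the assumption $r_j > s_{\pi(j)}$ is used. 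I expect this triangle-inequality bookkeeping to be the main obstacle, mostly in handling the absolute values cleanly across the subcases $r_j \le s_{\pi(i)}$ versus $r_j > s_{\pi(i)}$, but in each subcase the computation is short.

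Finally, I would note that the strict decrease contradicts the optimality of $\pi$, so the assumption $r_j > s_{\pi(j)}$ is untenable and $r_j \le s_{\pi(j)}$ must hold, proving the first statement. For the second statement I would either repeat the analogous exchange argument with the inequalities reversed, or observe that reflecting all points through the origin ($x \mapsto -x$) turns an instance satisfying the hypotheses of the second claim into one satisfying those of the first, preserving optimality and distances, so the second follows immediately from the first.
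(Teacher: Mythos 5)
Your proposal is correct and follows essentially the same route as the paper: an exchange argument by contradiction, with the case split on whether $r_j \le s_{\pi(i)}$ or $r_j > s_{\pi(i)}$, and the second bullet handled by symmetry. Your factoring of the cost change as $\bigl(|r_j - s_{\pi(i)}| - |r_j - s_{\pi(j)}|\bigr) - \bigl(s_{\pi(i)} - s_{\pi(j)}\bigr)$ is just a slightly tidier bookkeeping of the same two displayed inequalities in the paper's proof.
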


\begin{proof}
	Consider the case \begin{math}r_i\le s_{\pi(j)}< s_{\pi(i)}\end{math} (the other case is similar). Suppose, to the contrary, that \begin{math}r_j>s_{\pi(j)}\end{math}. If \begin{math}r_j\le s_{\pi(i)}\end{math} then
	\begin{eqnarray*}
		\dist(r_i,s_{\pi(j)})+\dist(r_j,s_{\pi(i)}) &=& s_{\pi(j)}-r_i+s_{\pi(i)}-r_j\\
		&<& s_{\pi(i)}-r_i+r_j-s_{\pi(j)}\\
		&=& \dist(r_i,s_{\pi(i)})+\dist(r_j,s_{\pi(j)})
	\end{eqnarray*}
	by assumption \begin{math}r_j>s_{\pi(j)}\end{math} but this is a contradiction. If \begin{math}r_j>s_{\pi(i)}\end{math} then
	\begin{eqnarray*}
		\dist(r_i,s_{\pi(j)})+\dist(r_j,s_{\pi(i)}) &=& s_{\pi(j)}-r_i+r_j-s_{\pi(i)}\\
		&<& s_{\pi(i)}-r_i+r_j-s_{\pi(j)}\\
		&=& \dist(r_i,s_{\pi(i)})+\dist(r_j,s_{\pi(j)})
	\end{eqnarray*}
	by assumption \begin{math}s_{\pi(i)}>s_{\pi(j)}\end{math} but this is a contradiction.
\end{proof}

\begin{proposition}\label{switch}
	Consider an optimal matching corresponding to permutation \begin{math}\pi\end{math}. Suppose that 
	\begin{displaymath}\max\{r_i,r_j\}\le\min\{s_{\pi(i)},s_{\pi(j)}\} \textrm{ or } \min\{r_i,r_j\}\ge\max\{s_{\pi(i)},s_{\pi(j)}\}.\end{displaymath} 
	Then the matching corresponding to \begin{math}\pi'\end{math} where \begin{math}\pi'(i)=\pi(j)\end{math}, \begin{math}\pi'(j)=\pi(i)\end{math} and \begin{math}\pi'(\ell)=\pi(\ell)\end{math} if \begin{math}\ell\ne i,j\end{math} is also optimal.
\end{proposition}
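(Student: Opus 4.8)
The plan is to show that the swap from $\pi$ to $\pi'$ leaves the total matching cost unchanged; since $\pi$ is optimal, it follows immediately that $\pi'$ is optimal as well. The two permutations agree on every index other than $i$ and $j$, so the difference between the total cost of $\pi'$ and that of $\pi$ collapses to
\[
\bigl[\dist(r_i,s_{\pi(j)})+\dist(r_j,s_{\pi(i)})\bigr]-\bigl[\dist(r_i,s_{\pi(i)})+\dist(r_j,s_{\pi(j)})\bigr],
\]
and it suffices to check that this quantity is zero under either hypothesis.

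First I would treat the case $\max\{r_i,r_j\}\le\min\{s_{\pi(i)},s_{\pi(j)}\}$. Here both requests lie weakly to the left of both servers, so each of the four distances above is simply the server coordinate minus the request coordinate, with no absolute value to resolve. Substituting $\dist(r_a,s_{\pi(b)})=s_{\pi(b)}-r_a$ throughout, the first bracket becomes $s_{\pi(i)}+s_{\pi(j)}-r_i-r_j$ and so does the second, so their difference vanishes and the two matchings have equal cost.

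The second case $\min\{r_i,r_j\}\ge\max\{s_{\pi(i)},s_{\pi(j)}\}$ is symmetric: now both requests lie weakly to the right of both servers, every distance equals request minus server, and the same cancellation occurs. The only point requiring any care — the ``obstacle,'' such as it is — is recognizing that the two hypotheses are precisely the conditions under which all four absolute values resolve with a common sign; once that is observed, the equality of costs is a one-line computation and the optimality of $\pi'$ is automatic. In particular, no appeal to Proposition~\ref{order} is needed here.
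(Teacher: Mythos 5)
Your proposal is correct and matches the paper's own argument: both reduce the cost difference to the two swapped terms and observe that, under either hypothesis, all four distances resolve with a common sign, so the total cost is unchanged and optimality of \begin{math}\pi'\end{math} follows. The paper likewise treats only the case \begin{math}\max\{r_i,r_j\}\le\min\{s_{\pi(i)},s_{\pi(j)}\}\end{math} explicitly and notes the other case is symmetric, exactly as you do.
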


\begin{proof}
	Suppose that \begin{math}\max\{r_i,r_j\}\le\min\{s_{\pi(i)},s_{\pi(j)}\}\end{math} (the other case is similar). 
	\begin{eqnarray*}
		\dist(r_i,s_{\pi(i)})+\dist(r_j,s_{\pi(j)}) &=& s_{\pi(i)}-r_i+s_{\pi(j)}-r_j\\
		&=& s_{\pi'(j)}-r_j+s_{\pi'(i)}-r_i\\
		&=& \dist(r_j,s_{\pi'(j)})+\dist(r_i,s_{\pi'(i)}),
	\end{eqnarray*}
	therefore the sums of the distances of the matched points in the two matchings are equal.
\end{proof}

Consider an optimal matching corresponding to permutation \begin{math}\pi.\end{math} Let \begin{displaymath}L_\pi=\{r_i : s_{\pi(i)}\le r_i, 1\le i\le n\},\end{displaymath} the set of the requests matched to their left, and \begin{displaymath}R_\pi=\{r_i : s_{\pi(i)}> r_i, 1\le i\le n\},\end{displaymath} the set of the requests matched to their right. 
Algorithm \begin{math}\mathrm{LR}\end{math} serves \begin{math}r_i\end{math} in the following way:
\begin{itemize}
	\item if there is an unmatched server \begin{math}s_j\end{math} equal to \begin{math}r_i,\end{math} then match them;
	\item if all unmatched servers are greater than \begin{math}r_i,\end{math} then match \begin{math}r_i\end{math} to the least unmatched server;
	\item if all unmatched servers are less than \begin{math}r_i,\end{math} then match \begin{math}r_i\end{math} to the largest unmatched server;
	\item otherwise read a bit of advice:
	\begin{itemize}
		\item if it is 0 (it means that \begin{math}r_i\in L_\pi\end{math}), then match \begin{math}r_i\end{math} to the greatest unmatched server less than \begin{math}r_i\end{math};
		\item if it is 1 (it means that \begin{math}r_i\in R_\pi\end{math}) then match \begin{math}r_i\end{math} to the least unmatched server greater than \begin{math}r_i\end{math}.
	\end{itemize}
\end{itemize}

\begin{theorem}
	Algorithm \begin{math}\mathrm{LR}\end{math} reads at most \begin{math}n-1\end{math} bits of advice, and gives an optimal matching.    
\end{theorem}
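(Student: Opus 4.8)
The statement has two parts, and the advice bound is the easy one. The plan is to notice that each of the first three branches of $\mathrm{LR}$ serves $r_i$ without touching the tape, so a bit is consumed only in the \emph{otherwise} branch, which presupposes an unmatched server strictly below \emph{and} one strictly above $r_i$. When the last request $r_n$ arrives exactly one server is still free, so $r_n$ necessarily falls into one of the first three branches; hence advice can be read only while serving $r_1,\ldots,r_{n-1}$, giving the bound $n-1$.

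For optimality I would argue by induction on the number of served requests, carrying the invariant that the partial matching produced by $\mathrm{LR}$ on $r_1,\ldots,r_i$ extends to an optimal matching $\pi^{(i)}$ of all $n$ pairs which, in addition, assigns every still-unserved request $r_j$ (for $j>i$) to the same side (left-or-equal versus strictly right) as the reference optimal matching $\pi$. The base case $i=0$ is witnessed by $\pi^{(0)}=\pi$, and the output of $\mathrm{LR}$ is read off from the optimal $\pi^{(n)}$. The last clause of the invariant is exactly what guarantees that the advice bit read in the \emph{otherwise} branch for $r_j$, which encodes membership in $L_\pi$ or $R_\pi$, is still the correct side for the current reference matching $\pi^{(i)}$, so that the choice made by $\mathrm{LR}$ is legitimate.

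The inductive step is an exchange argument. Let $s^\ast$ be the server $\mathrm{LR}$ assigns to $r_i$ and let $s'$ be the server assigned to $r_i$ by $\pi^{(i-1)}$. If $s^\ast=s'$ we keep $\pi^{(i)}=\pi^{(i-1)}$; otherwise $s^\ast$ is matched in $\pi^{(i-1)}$ to some later request $r_j$ (it is free for $\mathrm{LR}$, and by the invariant $\pi^{(i-1)}$ agrees with $\mathrm{LR}$ on $r_1,\ldots,r_{i-1}$), and we set $\pi^{(i)}$ to be $\pi^{(i-1)}$ with $r_i\to s^\ast$ and $r_j\to s'$ swapped. I would then verify branch by branch that this swap preserves both optimality and the side of $r_j$. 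In the exact-match branch ($s^\ast=r_i$) optimality is immediate from a one-line triangle-inequality estimate, since the pair $(r_i,s^\ast)$ now costs $0$. In the two ``all servers on one side'' branches and in the advice branches, Proposition~\ref{order} applied to $\pi^{(i-1)}$ pins down on which side of $s'$ the displaced request $r_j$ must sit: for instance, when every free server exceeds $r_i$ one has $r_i\le s^\ast< s'$, whence Proposition~\ref{order} forces $r_j\le s^\ast$, so that $r_i$ and $r_j$ lie on the same side of both $s^\ast$ and $s'$ and Proposition~\ref{switch} applies verbatim to keep the cost optimal. The same computation shows that the side of $r_j$ is unchanged, so the pending advice remains valid.

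The main obstacle I expect is precisely this last piece of bookkeeping: checking in every sub-configuration that the displaced future request $r_j$ keeps its prescribed direction. The delicate cases are the coincidences, where $r_j$ equals the server $s^\ast$: then $r_j$ sits on the boundary between $L$ and $R$, and the swap can tip it from one side to the other, threatening the correctness of an advice bit that $\mathrm{LR}$ will later read for $r_j$. The cleanest way around this is to treat coincidences first, observing that whenever a request meets a free server of equal value they may be matched at cost $0$ without loss of optimality, and to reserve Propositions~\ref{order} and~\ref{switch}, which are tailored to control exactly these boundary equalities, for the strict configurations. Once direction-preservation is established in all cases the induction closes, and $\pi^{(n)}$, equal to the matching output by $\mathrm{LR}$, is optimal.
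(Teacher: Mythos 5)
Your advice-counting argument is exactly the paper's (one bit at most per request, none for the last, since only one server remains free), and your overall plan --- an exchange argument driven by Proposition~\ref{order} and Proposition~\ref{switch} --- is also the paper's, there phrased as induction on $n$ by stripping off the first request. The genuine gap is the invariant you carry: that every still-unserved request keeps, under the exchanged matching $\pi^{(i)}$, the same side as under the \emph{original} reference $\pi$. That is false, and not harmlessly so. Take servers $s_1=0$, $s_2=5$, $s_3=10$ and requests $r_1=3$, $r_2=5$, $r_3=0$, with reference optimum $\pi$ matching $r_1\to 10$, $r_2\to 5$, $r_3\to 0$ (cost $7=\mathrm{OPT}$), so the bits anchored to $\pi$ are $1,0$. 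Serving $r_1$ with bit $1$, $\mathrm{LR}$ takes $s^\ast=5$ while $s'=10$, and the displaced request is $r_2=5=s^\ast$ --- exactly the boundary coincidence you flagged. The swap $\pi^{(1)}$ ($r_1\to 5$, $r_2\to 10$, $r_3\to 0$) is still optimal, but $r_2$ has flipped from $L$ to $R$; $\mathrm{LR}$ reading the stale bit $0$ then matches $r_2\to 0$ and $r_3\to 10$, total cost $17>7$. So with bits fixed by an arbitrary optimal $\pi$ the algorithm genuinely fails, and your proposed remedy (settling coincidences by zero-cost matches first) does not apply: when $r_2$ arrives, the equal server is already occupied by $r_1$. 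Worse, a normalization of $\pi$ that prefers zero-cost pairs would select precisely this bad reference.

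The missing idea is that the oracle must anchor the bit for $r_i$ to the \emph{current} matching $\pi^{(i-1)}$, not to $\pi$: knowing the whole input, it can simulate the exchange process and write the bits of the evolving matchings. The exchange then only needs to preserve the side of the request being served \emph{right now} (so that the bit just read remains legitimate), and this does hold in every branch: in the advice branches $r_i$ differs from both $s^\ast$ and $s'$ (both are free, so an equality would have triggered the exact-match branch), Proposition~\ref{order} places $r_j$ weakly on the correct side, Proposition~\ref{switch} preserves the cost, and $r_i$ stays strictly on its side of its new server; sides of \emph{future} requests are allowed to drift, the drift being absorbed by re-choosing the reference. This is implicitly how the paper's proof works: its induction on $n$ says ``consider an optimal matching'' afresh for the residual instance at each level, so only the first request's side must survive a single application of Propositions~\ref{order} and~\ref{switch} --- which it does. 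With your invariant weakened in this way (the partial $\mathrm{LR}$-matching extends to \emph{some} optimal matching, and the bit for the next request is defined from it), your induction closes and is essentially the paper's argument in prefix-invariant form.
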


\begin{proof}
	Note that algorithm \begin{math}\mathrm{LR}\end{math} reads at most one advice bit per request, and it does not read any for serving the last request. 
	
	We prove optimality by induction on \begin{math}n\end{math}. The case \begin{math}n=1\end{math} is trivial. Suppose that \begin{math}n>1\end{math} and the statement holds for all \begin{math}i<n\end{math}. Consider an optimal matching corresponding to permutation \begin{math}\pi\end{math}. If \begin{math}r_1\in L_\pi\end{math} and \begin{math}s_{\pi(1)}=\max\{s_j:s_j\le r_i, 1\le j\le n\},\end{math} then the statement is true by the induction hypothesis. Similarly, if \begin{math}r_1\in R_\pi\end{math} and \begin{math}s_{\pi(1)}=\min\{s_j:s_j>r_i, 1\le j\le n\},\end{math} then we are ready.
	
	Suppose now, that \begin{math}r_1\in L_\pi\end{math} and \begin{math}s_{\pi(1)}<\max\{s_j:s_j\le r_i, 1\le j\le n\}=s_{\pi(i)}\end{math} for some \begin{math}i>1\end{math}. Proposition~\ref{order} implies that \begin{math}r_i\in L_\pi\end{math} too. Therefore, by Proposition~\ref{switch}, there is an optimal matching corresponding to permutation \begin{math}\pi'\end{math} such that \begin{math}s_{\pi'(1)}=\max\{s_j:s_j\le r_i, 1\le j\le n\},\end{math} and we can apply the argument above. Similarly, if \begin{math}r_1\in R_\pi,\end{math} then there is an optimal matching corresponding to permutation \begin{math}\pi'\end{math} such that \begin{math}s_{\pi'(1)}=\min\{s_j:s_j\ge r_i, 1\le j\le n\}\end{math}, therefore the statement holds.
\end{proof}

The following theorem shows that there is no better optimal algorithm.

\begin{theorem}
	There is no online algorithm for the matching problem on the line which provides an optimal matching while reading less than \begin{math}n-1\end{math} bits of advice. 
\end{theorem}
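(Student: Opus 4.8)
The plan is to prove a lower bound on advice complexity, so the natural strategy is an adversary/counting argument: I want to exhibit a family of inputs that are indistinguishable to the algorithm up to the point where a decision must be made, yet require different decisions to stay optimal. Since the algorithm reads from an advice tape, the standard technique is to show that the number of distinct optimal behaviors the algorithm must be able to produce exceeds $2^{n-2}$, so that fewer than $n-1$ advice bits cannot encode enough distinct advice strings to handle all instances correctly.

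\medskip

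First I would construct a concrete set of hard instances. The key tension exploited by Algorithm $\mathrm{LR}$ is exactly the ``otherwise'' case: when a request $r_i$ sits strictly between an unmatched server on its left and an unmatched server on its right, the algorithm genuinely needs one bit to decide whether $r_i\in L_\pi$ or $r_i\in R_\pi$. So I would design instances in which each of the first $n-1$ requests forces such an ambiguous choice, and in which the two choices (left or right) lead to genuinely different optimal matchings with different costs if the wrong one is taken. A clean way to do this is to fix the servers, say at integer positions $s_j=j$, and feed requests at half-integer positions arranged so that at step $i$ there remain unmatched servers strictly on both sides of $r_i$; the requests should be chosen so that independently at each step, both the ``match left'' and ``match right'' continuations are consistent with some globally optimal matching, but only one of them is correct for the particular instance being run.

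\medskip

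The heart of the argument is a counting step. I would consider the set $\mathcal{I}$ of all such instances obtained by independent binary choices at the first $n-1$ requests, giving $|\mathcal{I}|=2^{n-1}$ instances, and argue that on any two distinct instances the algorithm must commit to different matchings (since a wrong left/right commitment strictly increases the cost above optimum, violating $1$-competitiveness with the required constant). Because the prefixes of requests seen by the algorithm agree until the decisive moment, the only thing that can differentiate the algorithm's behavior across two instances with the same prefix is the content of the advice tape read so far. By a now-standard tape-model argument, if the algorithm reads at most $n-2$ bits, then at most $2^{n-2}$ distinct advice strings (equivalently, distinct deterministic behaviors) are available, which is strictly fewer than the $2^{n-1}$ required; hence two distinct instances are forced into identical play and at least one is served non-optimally. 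The technical care here is in the off-by-one bookkeeping: the paper's upper bound is $n-1$ bits because no advice is read on the last request, so my lower-bound family should make the first $n-1$ requests decisive while the very last request is forced, matching the upper bound exactly.

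\medskip

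The main obstacle I anticipate is verifying the \emph{independence} of the choices, namely that the construction really yields $2^{n-1}$ instances each demanding a distinct optimal response, rather than choices that interact so that some bits become inferable from earlier requests. Concretely, I must ensure that at step $i$ the correct left/right decision is not determined by the positions $r_1,\dots,r_{i-1}$ already revealed, for otherwise the algorithm could deduce it without advice and the counting collapses. Designing server and request positions so that the optimal matching decomposes into $n-1$ truly independent binary forks — and proving that a wrong fork is strictly suboptimal by more than any fixed additive constant, so the $1$-competitive bound is violated for large enough instances — is where the real work lies. Once independence and strict suboptimality of wrong choices are established, the counting argument closes the gap immediately.
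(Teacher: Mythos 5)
Your high-level strategy---an adversarial family of $2^{n-1}$ instances whose request prefixes agree up to a decisive step, combined with a counting/pigeonhole argument over advice strings---is exactly the strategy of the paper's proof, and your counting step is essentially sound in the tape model: if every run reads at most $n-2$ bits, the behavior on any fixed request prefix is determined by the first $n-2$ tape bits, so at most $2^{n-2}$ distinct behaviors are available, fewer than the $2^{n-1}$ needed. (The paper formalizes this same point slightly differently, by proving that the advice words for any two instances must be distinct \emph{and} mutually non-prefixes, which with $2^{n-1}$ instances forces some word of length at least $n-1$; your fixed-length pigeonhole and the paper's prefix-freeness argument are interchangeable here, and you are right that distinctness alone would not suffice in the tape model.) The genuine gap is that the hard-instance family is never constructed, and you yourself defer precisely the step you call ``where the real work lies''---but that step \emph{is} the proof. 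Worse, the shape you envisage, namely a family in which each of the first $n-1$ requests of a common sequence is an independent binary fork, cannot exist as described: for a fixed request sequence the optimal left/right decisions are determined by the instance itself, so the family must branch in the \emph{requests}, at varying depths, and the count $2^{n-1}$ then arises from a recursive tree rather than an independent product. The paper's construction does exactly this: servers at $s_i=i$, a main sequence $r_i=n-2^{-i}$ clustered just to the left of $s_n$, and for each $1\le k<n$ the length-$(n-k)$ prefix of this sequence continued by an arbitrary member of the recursively defined family $\mathcal{I}_k$ on the leftover servers $s_1,\dots,s_k$. The decisive structural fact, which your sketch has no substitute for, is that in every optimal solution $s_n$ is matched to the \emph{last} request of the prefix; hence at each step the correct action (send the current request right to the far servers, or left) depends only on whether the pattern breaks at the \emph{next} step, one unpredictable bit per step, with the recursion supplying the rest and $|\mathcal{I}_n|=1+\sum_{k=1}^{n-1}|\mathcal{I}_k|=2^{n-1}$. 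One must then verify, as the paper does by a three-case induction, that any two members of the family require different actions at their last common request; without this verification (and without any concrete positions) your argument does not close.

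A secondary point: the theorem concerns algorithms that output an \emph{exactly} optimal matching, so any strictly positive cost gap at a wrong fork suffices; your added requirement that a wrong fork be suboptimal ``by more than any fixed additive constant'' is not needed for this statement, and with servers fixed at $1,\dots,n$ it could not be met anyway without rescaling the instances. In the paper's family the penalties for wrong decisions are tiny (on the order of $2^{-i}$), which is enough because the target is optimality, not weak $1$-competitiveness.
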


\begin{proof}
	Consider an optimal algorithm \begin{math}\mathrm{A}\end{math}.
	For each \begin{math}n\in\mathbb{Z}^+\end{math} we construct a set \begin{math}\mathcal{I}_n\end{math} of inputs, each of size \begin{math}n,\end{math} in a recursive way. An input consists of a set of servers and a sequence of requests. The set of servers is the same for every input in \begin{math}I_n\end{math}: \begin{math}s_i=i\end{math} for each \begin{math}i\in \{1, \dots, n\}.\end{math} We only need to determine the sequences of requests in \begin{math}\mathcal{I}_n.\end{math} The first input \begin{math}\mathcal{I}_1\end{math} has one element consisting of one request: \begin{math}r_1=1\end{math}.
	
	Suppose, that \begin{math}n>1.\end{math} The request sequence \begin{math}\rho_{0},\end{math} consisting of the requests \begin{math}r_i=n-2^{-i}\end{math}, \begin{math}i=1,\ldots,n\end{math} plays a special role in \begin{math}\mathcal{I}_n.\end{math} It belongs to \begin{math}\mathcal{I}_n,\end{math} and every other 
	request sequence of \begin{math}\mathcal{I}_n\end{math} can be obtained from prefixes of \begin{math}\rho_0\end{math} by extending it to having length \begin{math}n\end{math} with request sequences from \begin{math}\mathcal{I}_k,\end{math} \begin{math}1\le k<n\end{math}.
	
	More formally, repeat the following for every 
	\begin{math}1\le k<n.\end{math} Take the prefix of \begin{math}\rho_0\end{math} of length \begin{math}n-k\end{math},
	denote it by \begin{math}\rho,\end{math} its elements are \begin{math}r_i=n-2^{-i}\end{math},
	\begin{math}i=1, \dots, n-k.\end{math} Then for every \begin{math}\rho'\in \mathcal{I}_k\end{math}
	we form a new element \begin{math}\rho\rho'\end{math} of \begin{math}\mathcal{I}_n.\end{math} 
	Hence, every prefix \begin{math}\rho\end{math} of \begin{math}\rho_0\end{math} with length \begin{math}n-k\end{math} 
	is extended into \begin{math}|\mathcal{I}_k|\end{math} different request sequences of \begin{math}\mathcal{I}_n\end{math}.
	It is easy to see, that \begin{math}s_n\end{math} is matched to \begin{math}r_{n-k}\end{math} in all optimal serving of the request sequence \begin{math}\rho\rho'\end{math}.
	Note also, that \begin{math}s_n=n\end{math} is matched to \begin{math}r_n\end{math} in every optimal serving of \begin{math}\rho_{0}.\end{math} 
	
	Using induction, one can easily show that
	\begin{displaymath}|\mathcal{I}_n|=1+\sum_{i=1}^{n-1}|\mathcal{I}_{i}|=2^{n-1}.\end{displaymath} 
	
	We prove that \begin{math}\mathrm{A}\end{math} needs different advice words for any two elements of \begin{math}\mathcal{I}_n\end{math}, and none of these advice words can be a prefix of the other. We proceed by induction on \begin{math}n\end{math}. The statement is trivial for \begin{math}n=1\end{math}. Suppose that \begin{math}n>1,\end{math} and the statement holds for all \begin{math}k<n\end{math}. Let \begin{math}\rho\end{math} consisting of \begin{math}r_1,\ldots,r_n\end{math} and \begin{math}\rho'\end{math} consisting of \begin{math}r'_1,\ldots,r'_n\end{math} two different elements of \begin{math}\mathcal{I}_n\end{math}. We have three cases.
	
	If \begin{math}\rho=\rho_0\end{math} and \begin{math}\rho'\end{math} is constructed for an element of \begin{math}\mathcal{I}_{k}\end{math} for some \begin{math}k<n\end{math}, then the first \begin{math}n-k\end{math} requests are identical, \textit{i.e.} \begin{math}r_1=r'_1,\ldots,r_{n-k}=r'_{n-k}\end{math}, moreover, in any optimal serving of \begin{math}\rho,\end{math} \begin{math}s_n\end{math} is matched to \begin{math}r_n,\end{math} and in any optimal serving of \begin{math}\rho',\end{math} \begin{math}s_n\end{math} is matched to \begin{math}r'_{n-k}\end{math}. Therefore \begin{math}\mathrm{A}\end{math} cannot distinguish between the two inputs knowing only the first \begin{math}n-k\end{math} requests, but the \begin{math}(n-k)\end{math}th requests require different treatment in the two inputs, therefore the statement holds.
	
	If \begin{math}\rho\end{math} is constructed for an element of \begin{math}\mathcal{I}_{k}\end{math} and \begin{math}\rho'\end{math} is constructed for an element of \begin{math}\mathcal{I}_{l}\end{math} for some \begin{math}k<l<n\end{math}, then the first \begin{math}n-l\end{math} requests are identical, \textit{i.e.} \begin{math}r_1=r'_1,\ldots,r_{n-l}=r'_{n-l}.\end{math} Moreover in any optimal serving of \begin{math}\rho,\end{math} \begin{math}s_n\end{math} is matched to \begin{math}r_{n-k}\end{math} and any optimal serving of \begin{math}\rho',\end{math} \begin{math}s_n\end{math} is matched to \begin{math}r'_{n-l}\end{math}. Therefore \begin{math}\mathrm{A}\end{math} cannot distinguish between the two inputs knowing only the first \begin{math}n-l\end{math} requests, but the \begin{math}(n-l)\end{math}th requests require different treatment in the two inputs, therefore the statement holds.
	
	If \begin{math}\rho\end{math} and \begin{math}\rho'\end{math} are constructed for two elements of \begin{math}\mathcal{I}_{k}\end{math} for some \begin{math}k<n\end{math}, then the first \begin{math}n-k\end{math} requests are identical, \textit{i.e.} \begin{math}r_1=r'_1,\ldots,r_{n-k}=r'_{n-k}\end{math}, moreover \begin{math}(n-n')\end{math}th request is matched to \begin{math}s_n\end{math} and the first \begin{math}n-k-1\end{math} requests are matched to \begin{math}s_{k+1},\ldots, s_{n-1}\end{math} in both cases, so the statement holds by the induction hypothesis.
\end{proof}

\section{Algorithms \begin{math}\mathrm{DIVIDE}_k\end{math} and \begin{math}\mathrm{RESCALE}\end{math}}\label{divide}

In this section we present an algorithm for each \begin{math}k>0\end{math} integer. We will see that there is a trade-off between the amount of advice and the competitive ratio: larger \begin{math}k\end{math} means a better competitive ratio but also more advice.  

First we assume that \begin{math}s_1=1\end{math}, \begin{math}s_n=N-1\end{math}, \begin{math}N\in\mathbb{Z}^+\end{math} and \begin{math}r_i\in\mathbb{Z}\end{math} for each \begin{math}i=1,\ldots,n\end{math}. Algorithm \begin{math}\mathrm{DIVIDE}_k\end{math} determines \begin{math}k\end{math} blocks on the line, each containing about \begin{math}n/k\end{math} servers, identifies requests whose pairs are outside, and works as a shell algorithm using algorithm \begin{math}\mathrm{LR}\end{math} to handle these requests, and a \begin{math}c(n)\end{math}-competitive algorithm \begin{math}\mathrm{A}\end{math} inside the blocks as subroutines.

Consider an optimal matching corresponding to permutation \begin{math}\pi\end{math} for which \begin{math}\pi(i)<\pi(j)\end{math} if \begin{math}r_i<r_j\end{math}. Note that it is well-known that such an optimal matching exists (it follows by Proposition~\ref{order} and Proposition~\ref{switch} as well).

A detailed description of Algorithm \begin{math}\mathrm{DIVIDE}_k\end{math} is as follows.

\vspace{10pt}

\underline{Pre-processing:} Let \begin{math}\ell \equiv n \mod k\end{math} (so \begin{math}0\le\ell<k\end{math}), and consider the following partitioning of the servers: 
\begin{displaymath}S_i=\{s_{(i-1)\lceil\frac{n}{k}\rceil+1},\ldots,s_{i\lceil\frac{n}{k}\rceil}\}\quad \textrm{ for every } 1\le i\le \ell,\end{displaymath}
\begin{displaymath}S_{\ell+j} = \{s_{\ell\lceil\frac{n}{k}\rceil+(j-1)\lfloor\frac{n}{k}\rfloor+1}, s_{\ell\lceil\frac{n}{k}\rceil+j\lfloor\frac{n}{k}\rfloor}\}\quad \textrm{ for every } 1\le j\le k-\ell-1.\end{displaymath}
Let
\begin{displaymath}p_i=\frac{1}{2}\left(\max\{s:s\in S_{i}\}+\min\{s:s\in S_{i+1}\}\right),\quad \textrm{ for every } i=1,\ldots, k-1.\end{displaymath}
Determining the blocks: \begin{displaymath}B_1=(-\infty, p_1],\ B_i=(p_{i-1},p_{i}],\ i=2,\ldots,k-1,\ B_k=(p_{k-1},\infty).\end{displaymath} 
Note that if \begin{math}p_{i-1}=p_i\end{math} then \begin{math}B_i=\emptyset\end{math}.\\[10pt]
Partitioning the points:
\begin{itemize}
	\item For each \begin{math}i=2,\ldots,k\end{math} the algorithm reads \begin{math}\lfloor\log_2 N\rfloor\end{math} bits of advice, \textit{i.e.}, a word \begin{math}q_{i,L}\end{math}, which is the minimum of \begin{math}N\end{math} and the position of the rightmost request of block \begin{math}B_i,\end{math} such that this request has to be served by a server in \begin{math}S_j\end{math} for some \begin{math}j<i\end{math}. If there is no such a request in block \begin{math}B_i,\end{math} then word \begin{math}q_{i,L}\end{math} is 0. If \begin{math}q_{i,L}= N\end{math} then set \begin{math}q_{i,L}:=\infty\end{math}. Let
	\begin{displaymath}L=\bigcup_{i=2}^k (p_{i-1},q_{i,L}],\end{displaymath} 
	where for \begin{math}a>b\end{math} we let  \begin{math}(a,b]=\emptyset.\end{math} 
	
	\item For each \begin{math}i=1,\ldots,k-1\end{math} the algorithm reads \begin{math}\lfloor\log_2 N\rfloor\end{math} bits of advice, \textit{i.e.}, a word \begin{math}q_{i,R}\end{math}, which is the maximum of 0 and the position of the leftmost request of block \begin{math}B_i,\end{math} such that this request has to be served by a server in \begin{math}S_j\end{math} for some \begin{math}j>i\end{math}. If there is no such a request in block \begin{math}B_i,\end{math} then word \begin{math}q_{i,R}\end{math} is the all-1 word of length \begin{math}\lfloor\log_2 N\rfloor.\end{math} If \begin{math}q_{i,R}= 0,\end{math} then set \begin{math}q_{i,R}:=-\infty\end{math}. Let
	\begin{displaymath}R=\bigcup_{i=1}^{k-1} [q_{i,R},p_{i}], \quad \textrm{ and } B=\mathbb{R}\setminus\{L\cup R\}.\end{displaymath} 
	Observe, that if a request \begin{math}r\end{math} belongs to \begin{math}B,\end{math} then its pair is in the block of \begin{math}r.\end{math} Note that there may be requests equal to some \begin{math}q_{i,L}\end{math} or \begin{math}q_{i,R}\end{math} with pairs in \begin{math}S_i\end{math}.
	
	\item (First marking procedure) Set \begin{math}M_R=\emptyset\end{math}. For each \begin{math}i=1,\ldots,k-1\end{math} if \begin{math}q_{i,R}>0,\end{math} then the algorithm reads 2 times \begin{math}\lfloor\log_2 n\rfloor\end{math} bits of advice, \textit{i.e.}, the number \begin{math}d_{i,R}\end{math} of requests equal to \begin{math}q_{i,R}\end{math} with pairs in \begin{math}S_i\end{math} and the number \begin{math}m=m_{i,R}\end{math} of requests in \begin{math}R\cap B_i\end{math} minus \begin{math}d_{i,R}\end{math}, \textit{i.e.}, the number of requests in \begin{math}B_i\end{math} matched outside to the right in the matching corresponding to \begin{math}\pi\end{math}. If \begin{math}m>0.\end{math} let \begin{math}s_j\end{math} be the server
	in \begin{math}\bigcup_{j>i}S_j\setminus M_R\end{math}
	with minimal index. 
	Set \begin{math}M_R:=M_R\cup\{s_j,\ldots,s_{j+m-1}\}\end{math}.
	
	\item (Second marking procedure) Set \begin{math}M_L=\emptyset\end{math}. For each \begin{math}i=0,\ldots,k-2\end{math} if \begin{math}q_{k-i,L}>0,\end{math} then the algorithm reads 2 times \begin{math}\lfloor\log_2 n\rfloor\end{math} bits of advice, \textit{i.e.}, the number \begin{math}d_{k-i,L}\end{math} of requests equal to \begin{math}q_{k-i,L}\end{math} with pairs in \begin{math}S_{k-i}\end{math} and the number \begin{math}m=m_{k-i,L}\end{math} of requests in \begin{math}L\cap B_{k-i}\end{math} minus \begin{math}d_{k-i,L}\end{math}, \textit{i.e.}, the number of requests in \begin{math}B_{k-i}\end{math} matched outside to the left in the matching corresponding to \begin{math}\pi\end{math}. If \begin{math}m>0,\end{math} let \begin{math}s_j\end{math} be the server  
	in \begin{math}\bigcup_{j<k-i}S_j\setminus M_L\end{math}
	with maximal index. 
	Set \begin{math}M_L:=M_L\cup\{s_j,\ldots,s_{j-m+1}\}\end{math}.
\end{itemize}
Let \begin{math}M=M_R\cup M_L\end{math} be the set of the marked servers and \begin{math}U\end{math} the set of the unmarked servers.
\vspace{10pt}

\underline{Servicing requests.} Set an empty auxiliary advice tape for algorithm \begin{math}\mathrm{LR}\end{math}. \\[10pt]
Handling request \begin{math}r_i\end{math}:
\begin{itemize}
	\item If \begin{math}r_i\in B\cap B_j\end{math} for some \begin{math}1\le j\le k,\end{math} then \begin{math}U:=U\cup\{r_i\}\end{math} and use algorithm \begin{math}\mathrm{A}\end{math} on inputs in \begin{math}U\cap (S_j\cup B_j)\end{math} to serve \begin{math}r_i\end{math}.
	\item If \begin{math}r_i=q_{j,R}\end{math} for some \begin{math}1\le j \le k-1\end{math} and the number of unmarked requests in \begin{math}U\end{math} equal to \begin{math}q_{i,R}\end{math} is less than \begin{math}d_{i,R},\end{math} then \begin{math}U:=U\cup\{r_i\},\end{math} and use algorithm \begin{math}\mathrm{A}\end{math} on inputs in \begin{math}U\cap (S_j\cup B_j)\end{math} to serve \begin{math}r_i\end{math}.
	\item If \begin{math}r_i=q_{j,L}\end{math} for some \begin{math}2\le j \le k\end{math} and the number of unmarked requests in \begin{math}U\end{math} equal to \begin{math}q_{i,L}\end{math} is less than \begin{math}d_{i,L},\end{math} then \begin{math}U:=U\cup\{r_i\},\end{math} and use algorithm \begin{math}\mathrm{A}\end{math} on inputs in \begin{math}U\cap (S_j\cup B_j)\end{math} to serve \begin{math}r_i\end{math}.
	\item If \begin{math}r_i\in R\cap B_j\end{math}, \begin{math}r_i>q_{j,R}\end{math} for some \begin{math}1\le j \le k-1\end{math} and the number of marked requests in \begin{math}B_j\end{math} is less than \begin{math}m_{j,R},\end{math} then mark \begin{math}r_i\end{math}, let \begin{math}M:=M\cup\{r_i\}\end{math}, write bit 1 at the end of the bit sequence on the auxiliary tape, and use algorithm \begin{math}\mathrm{LR}\end{math} on inputs in \begin{math}M\end{math} to serve \begin{math}r_i\end{math}. If algorithm \begin{math}\mathrm{A}\end{math} did not read any bits of advice to serve request \begin{math}r_i,\end{math} then delete the last bit of the content of the auxiliary tape.
	\item If \begin{math}r_i=q_{j,R}\end{math} for some \begin{math}1\le j \le k-1\end{math} and the number of unmarked requests equal to \begin{math}q_{i,R}\end{math} is \begin{math}d_{i,R},\end{math} then mark \begin{math}r_i\end{math}, let \begin{math}M:=M\cup\{r_i\}\end{math}, write bit 1 at the end of the bit sequence on the auxiliary tape and use algorithm \begin{math}\mathrm{LR}\end{math} on inputs in \begin{math}M\end{math} to serve \begin{math}r_i\end{math}. If algorithm \begin{math}\mathrm{A}\end{math} did not read any bits of advice to serve request \begin{math}r_i,\end{math} then delete the the last bit of the content of the auxiliary tape.
	\item If \begin{math}r_i\in L\cap B_j\end{math}, \begin{math}r_i<q_{j,L}\end{math} for some \begin{math}2\le j \le k\end{math} and the number of marked requests in \begin{math}B_j\end{math} is less than \begin{math}m_{j,L},\end{math} then mark \begin{math}r_i\end{math}, let \begin{math}M:=M\cup\{r_i\}\end{math}, write bit 0 at the end of the bit sequence on the auxiliary tape, and use algorithm \begin{math}\mathrm{LR}\end{math} on inputs in \begin{math}M\end{math} to serve \begin{math}r_i\end{math}. If algorithm \begin{math}\mathrm{A}\end{math} did not read any bits of advice to serve request \begin{math}r_i,\end{math} then delete the the last bit of the content of the auxiliary tape.
	\item If \begin{math}r_i=q_{j,L}\end{math} for some \begin{math}2\le j \le k\end{math} and the number of unmarked requests equal to \begin{math}q_{i,L}\end{math} is \begin{math}d_{i,L},\end{math} then mark \begin{math}r_i\end{math}, let \begin{math}M:=M\cup\{r_i\}\end{math}, write bit 0 at the end of the bit sequence on the auxiliary tape, and use algorithm \begin{math}\mathrm{LR}\end{math} on inputs in \begin{math}M\end{math} to serve \begin{math}r_i\end{math}. If algorithm \begin{math}\mathrm{A}\end{math} did not read any bits of advice to serve request \begin{math}r_i,\end{math} then delete the the last bit of the content of the auxiliary tape.
\end{itemize}

\vspace{10pt}

\begin{theorem}\label{main}
	Algorithm \begin{math}\mathrm{DIVIDE}_k\end{math} is \begin{math}c(n/k)\end{math}-competitive and reads \begin{math}O(k(\log N + \log n))\end{math} bits of advice.
\end{theorem}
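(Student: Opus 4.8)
The plan is to treat the two assertions separately: the advice bound is a direct tally, while the competitive ratio requires decomposing both the algorithm's cost and the optimum along the block structure.

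For the advice, I would simply sum the bits read during pre-processing. The partitioning step reads one word $q_{i,L}$ of $\lfloor\log_2 N\rfloor$ bits for each $i\in\{2,\dots,k\}$ and one word $q_{i,R}$ of $\lfloor\log_2 N\rfloor$ bits for each $i\in\{1,\dots,k-1\}$, for a total of $2(k-1)\lfloor\log_2 N\rfloor$ bits. Each of the two marking procedures reads at most $2\lfloor\log_2 n\rfloor$ bits per block (the numbers $d$ and $m$), contributing at most $4(k-1)\lfloor\log_2 n\rfloor$ bits altogether. The auxiliary tape of $\mathrm{LR}$ is written internally and never read from the oracle, so it is not counted. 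This gives the bound $O(k(\log N+\log n))$.

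For the competitive ratio I would fix the order-preserving optimal matching $\pi$ guaranteed before the description of the algorithm and split its cost as $\mathrm{OPT}=\mathrm{OPT}_{\mathrm{cross}}+\mathrm{OPT}_{\mathrm{in}}$, according to whether the two endpoints of a matched pair lie in different blocks or in the same block. The first task is to show that $L$, $R$, $B$ together with the counts $d_{i,\cdot}$ and $m_{i,\cdot}$ single out, in every block $B_j$, exactly the requests that $\pi$ matches outside $B_j$: the requests of $R\cap B_j$ strictly to the right of $q_{j,R}$ and of $L\cap B_j$ strictly to the left of $q_{j,L}$, plus the prescribed numbers of boundary requests equal to $q_{j,R}$ or $q_{j,L}$. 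Here Proposition~\ref{order} guarantees that a request lying strictly beyond a threshold is indeed matched outside, while every point of $B$ is matched inside; a short counting identity then shows that in each block the number of unmarked servers of $S_j$ equals the number of within-block requests, so that algorithm $\mathrm{A}$ is always run on a well-posed instance of size at most $\lceil n/k\rceil$.

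The technical core is to prove that the greedy reservation carried out by the two marking procedures reproduces a \emph{canonical} optimal matching $\pi^\ast$ of the same cost as $\pi$. Reserving for the right-going requests the leftmost still-available servers in the blocks to the right (and symmetrically for the left-going ones) fixes a set $M$ of marked servers; I would argue, using Proposition~\ref{switch}, that because $\pi$ is order-preserving any discrepancy between its cross-block server choices and the greedy ones is removable by cost-preserving swaps, so that $M$ supports a cross-block matching of cost exactly $\mathrm{OPT}_{\mathrm{cross}}$ (any cheaper matching of $M$ would, after re-attaching the untouched within-block pairs, beat $\mathrm{OPT}$). Since $\mathrm{DIVIDE}_k$ feeds $\mathrm{LR}$ precisely the bits recording whether each marked request goes left or right, the optimality of $\mathrm{LR}$ proved above shows that the algorithm pays exactly $\mathrm{OPT}_{\mathrm{cross}}$ on the marked requests. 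I expect the disjointness of $M_R$ and $M_L$ and the correct treatment of boundary requests (through $d_{i,R}$ and $d_{i,L}$) to be the most delicate points.

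It remains to combine the two regimes. Inside block $B_j$ the subroutine $\mathrm{A}$ runs on the unmarked servers of $S_j$ and the within-block requests; the restriction of $\pi^\ast$ to $B_j$ is a valid matching of this sub-instance, so its cost upper-bounds the sub-instance optimum, and the $c(\cdot)$-competitiveness of $\mathrm{A}$ on an instance of size at most $\lceil n/k\rceil$ gives a cost of at most $c(n/k)$ times the within-block cost of $\pi^\ast$ on $B_j$. Summing over the $k$ blocks and adding the cross-block contribution yields total cost at most $\mathrm{OPT}_{\mathrm{cross}}+c(n/k)\,\mathrm{OPT}_{\mathrm{in}}\le c(n/k)\,\mathrm{OPT}$, where we use $c(n/k)\ge 1$; the additive constants coming from the $k$ invocations of $\mathrm{A}$ merge into a single additive constant.
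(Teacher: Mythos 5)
Your proposal is correct and its overall architecture coincides with the paper's: tally the advice bits directly (your count of $2(k-1)\lfloor\log_2 N\rfloor$ threshold bits plus at most $4(k-1)\lfloor\log_2 n\rfloor$ marking-count bits, with the auxiliary $\mathrm{LR}$ tape written by the algorithm itself and hence free, is just a more explicit version of what the paper dismisses as immediate), fix the order-preserving optimal $\pi$, show that the marked servers carry exactly the cross-block part of an optimal matching, use optimality of $\mathrm{LR}$ on $M$ and $c$-competitiveness of $\mathrm{A}$ on each block, and combine via $c(n/k)\ge 1$. The one genuine divergence is how the technical core is established. The paper proves that $\pi$ itself already matches the greedily marked servers to the cross-block requests, by induction on $m=\sum_i (m_{i,R}+m_{i+1,L})$: it takes the smallest $i$ with $m_{i,R}>0$, argues via Proposition~\ref{order} that the first server of $B_{i+1}$ must be matched by $\pi$ to a request of $B_i\cap R$, deletes that pair and recurses (symmetrically for left marks); no modification of $\pi$ is ever needed. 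You instead canonicalize $\pi$ into $\pi^\ast$ by cost-preserving swaps via Proposition~\ref{switch}. This can be made to work, but there is a point you should make explicit: the swaps must only permute servers \emph{among cross-block requests}, because the advice words $q_{i,\cdot}$, $d_{i,\cdot}$, $m_{i,\cdot}$ were computed from $\pi$, and a swap turning a within-block pair into a cross-block pair would yield an optimal matching no longer described by the advice the algorithm actually reads. Order-preservation is precisely what rescues your plan here: if a cross request $r\in B_i\cap R$ is matched to $s'$ while a greedily marked server $s''$ with smaller index is matched to some $r''$, then order-preservation forces $r''\le r\le p_i<s''$, so $r''$ is itself a right-going cross request, and Proposition~\ref{order} gives $r''\le s''$ so that Proposition~\ref{switch} applies; hence discrepancies never mix cross and within pairs. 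With that observation spelled out (together with the disjointness of $M_R$ and $M_L$, which you correctly flag and which the paper settles by a pigeonhole argument), your swap-based route closes, at the price of an extra consistency check that the paper's deletion induction avoids by construction; what your route buys in exchange is a cleaner separation between the existence of a canonical optimal matching supported on $M$ and the online bookkeeping of the algorithm.
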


\begin{proof}
	The second part of the statement follows immediately from the definition of algorithm \begin{math}\mathrm{DIVIDE}_k\end{math}.
	
	We can assume without loss of generality that whenever \begin{math}r_i=r_{i'}\end{math} and \begin{math}r_i\end{math} is matched to \begin{math}s_j\end{math} and \begin{math}r_{i'}\end{math} is matched to \begin{math}s_{j'}\end{math} by \begin{math}\mathrm{DIVIDE}_k\end{math} where \begin{math}j<j',\end{math} then \begin{math}\pi(i)<\pi(i')\end{math}. We will prove that if \begin{math}r_j\in B_i\cap U\end{math} then \begin{math}s_{\pi(j)}\in S_i\cap U\end{math}, moreover, if \begin{math}r_j\in M,\end{math} then \begin{math}s_{\pi(j)}\in M\end{math}.
	
	At first we need to see that the algorithm terminates. The number of servers and the number of requests are equal by definition of the model. The first and second marking procedure do not get stuck by definition of the matching corresponding to \begin{math}\pi\end{math} and the marking procedures. Moreover \begin{math}M_L\cap M_R=\emptyset\end{math} by definition of \begin{math}\pi\end{math}. Indeed, suppose, to the contrary that \begin{math}s_i\end{math} is marked in the first and the second marking procedure as well. There are requests \begin{math}r_j\le s_i\end{math} and \begin{math}r_\ell> s_i\end{math} such that \begin{math}\pi(j)\ge i\end{math} and \begin{math}\pi(\ell)\le i\end{math} by definition of \begin{math}\pi\end{math} and the pigeonhole principle, therefore \begin{math}\pi(j)>\pi(\ell)\end{math}, but this is a contradiction.
	
	The algorithm uses \begin{math}\mathrm{LR}\end{math} to match requests in \begin{math}M\end{math} to servers in \begin{math}M\end{math}, and it uses algorithm \begin{math}\mathrm{A}\end{math} to match unmarked requests in \begin{math}B_i\end{math} to unmarked servers in \begin{math}S_i\end{math} for each \begin{math}i\in\{1,\ldots,k\}\end{math}. By definition of algorithm \begin{math}\mathrm{DIVIDE}_k,\end{math} the number of the marked servers and the number of the marked requests are equal, and there is always an unread bit on the auxiliary advice tape whenever it is necessary.
	
	Moreover, the number of unmarked requests in \begin{math}B_i\end{math} and the number of unmarked servers in \begin{math}S_i\end{math} are equal, and the matching corresponding to \begin{math}\pi\end{math} match them to each other for all \begin{math}i\in\{1,\ldots,k\}\end{math} (this implies that a request in \begin{math}M\end{math} is matched to a server in \begin{math}M\end{math} in the optimal matching). We will prove this by induction on \begin{math}m=\sum_{i=1}^{k-1} (m_{i,R} + m_{i+1,L})\end{math}. It is easy to see that the statement holds for \begin{math}m=0\end{math}. 
	
	Now suppose that \begin{math}m>0,\end{math} and the statement holds for all smaller cases. Let \begin{math}i\end{math} be the smallest index with \begin{math}m_{i,R}>0\end{math} and \begin{math}j\end{math} the index of the first server in block \begin{math}B_{i+1}\end{math}. Then there is a request \begin{math}r_\ell\in B_i\cap R\end{math} matched to \begin{math}s_j\end{math} in the optimal matching, \textit{i.e.}, \begin{math}j=\pi(\ell)\end{math}, otherwise by the minimality of \begin{math}i\end{math} there is a request \begin{math}r_{\ell'}>p_i(\ge r_\ell)\end{math} such that \begin{math}j=\pi(\ell')\end{math} and \begin{math}j'>j\end{math} such that for the first request \begin{math}r_\ell\in B_i\cap R\end{math} and \begin{math}j'=\pi(\ell)\end{math}, but this is a contradiction by Proposition~\ref{order}. Deleting \begin{math}r_\ell\end{math} and \begin{math}s_j\end{math} from the model we are ready by the induction hypothesis. 
	
	If \begin{math}\sum_{i=1}^{k-1} m_{i,R} = 0,\end{math} then let \begin{math}i\end{math} be the largest index with \begin{math}m_{i,L}>0\end{math} and \begin{math}j\end{math} the index of the last server in block \begin{math}B_{i-1}\end{math}. Then there is a request \begin{math}r_\ell\in B_i\cap L\end{math} matched to \begin{math}s_j\end{math} in the optimal matching, \textit{i.e.}, \begin{math}j=\pi(\ell)\end{math}, otherwise, by the maximality of \begin{math}i,\end{math} there is a request \begin{math}r_{\ell'}\le p_i(<r_\ell)\end{math} such that \begin{math}j=\pi(\ell')\end{math} and \begin{math}j'<j\end{math} such that for the first request \begin{math}r_\ell\in B_i\cap L\end{math} and \begin{math}j'=\pi(\ell)\end{math}, but it is a contradiction by Proposition~\ref{order}. Again, deleting \begin{math}r_\ell\end{math} and \begin{math}s_j\end{math} from the model we are ready by the induction hypothesis.
	
	Since algorithm \begin{math}\mathrm{LR}\end{math} is optimal on \begin{math}M\end{math} and algorithm \begin{math}\mathrm{A}\end{math} is \begin{math}c(n/k)\end{math}-competitive on the blocks, we conclude that algorithm \begin{math}\mathrm{DIVIDE}_k\end{math} is \begin{math}c(n/k)\end{math}-competitive.
\end{proof}

\begin{corollary}
	If \begin{math}N=n^b\end{math} for some constant \begin{math}b,\end{math} then Algorithm \begin{math}\mathrm{DIVIDE}_k\end{math} is \begin{math}c(n/k)\end{math}-competitive and reads \begin{math}O(k\log n)\end{math} bits of advice.
\end{corollary}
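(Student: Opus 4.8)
The plan is to derive the corollary directly from Theorem~\ref{main} by specializing the parameter $N$, observing that the hypothesis $N = n^b$ affects only the advice-complexity bound and leaves the competitive analysis untouched. First I would note that the competitive-ratio claim transfers verbatim: nothing in the proof of Theorem~\ref{main} that establishes the $c(n/k)$ ratio uses any relationship between $N$ and $n$ (the $N$ dependence enters only through the lengths $\lfloor\log_2 N\rfloor$ of the advice words encoding the $q_{i,L}, q_{i,R}$), so $\mathrm{DIVIDE}_k$ remains $c(n/k)$-competitive under the additional assumption.

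The remaining work is to simplify the bound $O(k(\log N + \log n))$ under $N = n^b$. I would substitute $\log N = \log(n^b) = b\log n$, so that $\log N + \log n = (b+1)\log n$. Because $b$ is a constant, the factor $(b+1)$ is absorbed into the implied constant of the big-$O$, giving $\log N + \log n = O(\log n)$ and hence $O(k(\log N + \log n)) = O(k\log n)$, as claimed. (The base of the logarithm is likewise irrelevant inside the $O(\cdot)$.)

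The main — and essentially only — point to keep straight is the book-keeping about what counts as a constant: the reduction works precisely because $b$ does not grow with $n$, so it may be hidden in the implied constant; if $b$ were allowed to depend on $n$ the simplification would fail. There is no genuine obstacle here, and I do not expect any hidden difficulty: the corollary is a one-line consequence of plugging the hypothesis into the advice bound of Theorem~\ref{main}.
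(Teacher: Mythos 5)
Your proposal is correct and matches the paper's (implicitly one-line) justification exactly: the corollary follows from Theorem~\ref{main} by substituting $\log N = b\log n$ into the advice bound and absorbing the constant $b$ into the $O(\cdot)$, with the competitive ratio untouched. The paper in fact states the corollary without any written proof, so your careful remark that $b$ must not depend on $n$ is a welcome, if minor, elaboration.
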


Now we assume that the positions of requests and servers may be arbitrary real numbers. Algorithm \begin{math}\mathrm{RESCALE}\end{math} works in the following way:\\[5pt]
Fix \begin{math}k\in\mathbb{Z^+}\end{math}. Set \begin{math}s_i'=n^3 (s_i-s_1)+1\end{math}, \begin{math}r_i' = \lfloor n^3 (r_i-s_1)\rfloor+1\end{math} and \begin{math}N=\lceil s_n'-s_1'\rceil-2\end{math}. Execute algorithm \begin{math}\mathrm{DIVIDE}_k\end{math} on this modified input, and match \begin{math}r_i\end{math} to \begin{math}s_j\end{math} when \begin{math}\mathrm{DIVIDE}_k\end{math} matches \begin{math}r_i'\end{math} to \begin{math}s_j'\end{math}.

\begin{theorem}
	Algorithm \begin{math}\mathrm{RESCALE}\end{math} is \begin{math}c(n/k)\end{math}-competitive, and it reads \begin{math}O(k(\log (s_n-s_1) + \log n))\end{math} bits of advice.
\end{theorem}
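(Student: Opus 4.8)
The plan is to reduce both claims to the behaviour of $\mathrm{DIVIDE}_k$ on the transformed instance, since $\mathrm{RESCALE}$ does nothing but feed $\mathrm{DIVIDE}_k$ the rescaled points and copy back the resulting permutation. First I would dispatch the advice count. Writing $s_n'-s_1'=n^3(s_n-s_1)$, we get $N=\lceil n^3(s_n-s_1)\rceil-2$, hence $\log N=O(\log n+\log(s_n-s_1))$. Substituting this into the bound $O(k(\log N+\log n))$ from Theorem~\ref{main} yields exactly $O(k(\log(s_n-s_1)+\log n))$. Before invoking Theorem~\ref{main} I would also check that the rescaled instance is admissible for $\mathrm{DIVIDE}_k$: the requests $r_i'=\lfloor n^3(r_i-s_1)\rfloor+1$ are integers, $s_1'=1$, $N\in\mathbb{Z}^+$ (outside the degenerate case mentioned below), and $N$ is the range used to encode request positions in the advice, with values at or beyond the endpoints handled by the $\infty/-\infty$ convention already built into $\mathrm{DIVIDE}_k$.

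For the competitive ratio I would factor the transformation $x\mapsto n^3(x-s_1)+1$ into an affine scaling, which multiplies every distance by $n^3$, followed by the rounding $\lfloor\cdot\rfloor$ applied only to the requests. The key estimate I would prove is a per-permutation cost comparison. Fix any permutation $\sigma$ and a matched pair $(r_i,s_{\sigma(i)})$; then
\begin{displaymath}
\bigl|\,r_i'-s_{\sigma(i)}'\,\bigr|=\bigl|\lfloor n^3(r_i-s_1)\rfloor-n^3(s_{\sigma(i)}-s_1)\bigr|,
\end{displaymath}
which differs from $n^3|r_i-s_{\sigma(i)}|$ by the fractional part of $n^3(r_i-s_1)$, a quantity in $[0,1)$. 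Summing over the $n$ pairs gives $\bigl|\,\mathrm{cost}'(\sigma)-n^3\,\mathrm{cost}(\sigma)\bigr|<n$ for every $\sigma$, where $\mathrm{cost}$ and $\mathrm{cost}'$ denote the matching cost on the original and on the rescaled instance, respectively.

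I would then apply this comparison twice. Let $\sigma$ be the permutation returned by $\mathrm{DIVIDE}_k$ (so $\mathrm{RESCALE}$ realises it on the original points) and let $\pi^{*}$ be an optimal permutation on the original instance. The lower half of the estimate gives $n^3\,\mathrm{ALG}<\mathrm{ALG}'+n$, where $\mathrm{ALG}=\mathrm{cost}(\sigma)$ and $\mathrm{ALG}'=\mathrm{cost}'(\sigma)$; the upper half gives $\mathrm{OPT}'\le\mathrm{cost}'(\pi^{*})<n^3\,\mathrm{OPT}+n$, since $\mathrm{OPT}'$ is the minimum over all permutations on the rescaled instance. Feeding these into the guarantee $\mathrm{ALG}'\le c(n/k)\,\mathrm{OPT}'+C$ from Theorem~\ref{main} and dividing by $n^3$ produces
\begin{displaymath}
\mathrm{ALG}\le c(n/k)\,\mathrm{OPT}+\frac{c(n/k)\,n+C+n}{n^3}.
\end{displaymath}

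The crux, and the step I expect to be the main obstacle, is arguing that the additive remainder is genuinely a constant, i.e.\ that the rounding does not degrade the ratio. This is exactly what the generous factor $n^3$ buys: the total rounding error over the $n$ requests is at most $n$ in rescaled units, so $n/n^3\to0$, and its amplification through the competitive factor contributes $c(n/k)\,n/n^3=c(n/k)/n^2$, which also tends to $0$ because $c(n/k)$ is the ratio of the best online algorithm on the line and is therefore polynomially bounded well below $n^2$ (sublinear already by the deterministic algorithm of \cite{ABNPS}); the term $C/n^3$ vanishes likewise. Hence every term of the remainder is $O(1)$, the additive slack is absorbed into the competitive constant, and $\mathrm{RESCALE}$ inherits the ratio $c(n/k)$ of $\mathrm{DIVIDE}_k$. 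I would finish by noting the degenerate case in which $n^3(s_n-s_1)$ is too small for $N$ to be a positive integer (in particular $s_1=s_n$), which is handled separately since then all servers coincide and the instance is trivial.
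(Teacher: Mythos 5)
Your proposal follows essentially the same route as the paper's proof: both bound the advice via $N=O(n^3(s_n-s_1))$, bound the rounding error of the transformation (your total error $<n$ in rescaled units is exactly the paper's $1/n^2$ in original units after dividing by $n^3$), apply the guarantee of Theorem~\ref{main} on the rescaled instance in both directions, and absorb the additive remainder using the assumption $c(n)=o(n^2)$. Your write-up is in fact slightly more careful than the paper's (the per-permutation comparison lemma, the explicit two-sided application, and the degenerate case $s_1=s_n$), but it is the same argument.
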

\begin{proof}
	The number of bits read is \begin{math}O(k(\log N +\log n)) = O(k(\log (s_n-s_1) + \log n))\end{math} by the choice of \begin{math}N\end{math}.
	
	For bounding the optimal costs, observe that 
	\begin{displaymath}\mathrm{OPT}(I')\ge n^3\cdot \mathrm{OPT}(I) - 1/n^2,\end{displaymath}
	where \begin{math}I\end{math} is the original, \begin{math}I'\end{math} is the modified input.
	Moreover \begin{displaymath}\mathrm{RESCALE}(I')\le n^3\cdot \mathrm{DIVIDE}_k(I) + 1/n^2,\end{displaymath} 
	therefore, by Theorem~\ref{main} we have
	\begin{displaymath}\mathrm{RESCALE}(I')\le n^3\cdot c(n/k)\cdot\mathrm{OPT}_k(I) + \frac{1}{n^2}\le c(n/k)\cdot\mathrm{OPT}_k(I') + \frac{c(n/k)+1}{n^2},\end{displaymath} 
	where \begin{math}\frac{c(n/k)}{n^2}=o(1),\end{math} since we can assume that \begin{math}c(n)=o(n^2)\end{math}.
\end{proof}

 Using the algorithm of \cite{ABNPS} as algorithm \begin{math}\mathrm{A}\end{math} we get \begin{math}O((n/k)^{\log_2 (3+\epsilon)-1}/\epsilon)\end{math}-com\-pe\-ti\-ti\-ve algorithms.

\section{Conclusions and further questions} 

Note that \begin{math}\mathrm{A}\end{math} may be randomized. In this case we use the expected value of the cost of \begin{math}\mathrm{A}\end{math} instead of the cost of \begin{math}\mathrm{A}\end{math} in the definition of the competitive ratio. Applying the \begin{math}O(\log n)\end{math}-competitive randomized algorithm of \cite{GL} as algorithm \begin{math}\mathrm{A}\end{math} we get \begin{math}O(\log (n/k))\end{math}-competitive randomized algorithms in the previous section.

There is no known lower bound on the advice complexity of non-optimal online algorithms other than the result of \cite{Mik}, \textit{e.g.} for constant competitiveness greater than 1, and it is not known whether a linear number of bits of advice is sufficient for non-constant competitiveness.

The following research question arises: Can one modify the methods of the present paper for other metric spaces, \textit{e.g.} tree metric spaces? 

\nocite{*}
\bibliographystyle{abbrvnat}
\bibliography{matching}

\begin{thebibliography}{9}
\providecommand{\natexlab}[1]{#1}
\providecommand{\url}[1]{\texttt{#1}}
\expandafter\ifx\csname urlstyle\endcsname\relax
  \providecommand{\doi}[1]{doi: #1}\else
  \providecommand{\doi}{doi: \begingroup \urlstyle{rm}\Url}\fi

\bibitem[Antoniadis et~al.(2019)Antoniadis, Barcelo, Nugent, Pruhs, and
  Scquizzato]{ABNPS}
A.~Antoniadis, N.~Barcelo, M.~Nugent, K.~Pruhs, and M.~Scquizzato.
\newblock A \(o(n)\)-competitive deterministic algorithm for online matching on
  a line.
\newblock \emph{Algorithmica}, 81\penalty0 (7):\penalty0 2917–2933, Mar.
  2019.
\newblock ISSN 1432-0541.
\newblock \doi{10.1007/s00453-019-00565-w}.

\bibitem[B\"{o}ckenhauer et~al.(2009)B\"{o}ckenhauer, Komm, Královič,
  Královič, and M\"{o}mke]{BKKKM}
H.-J. B\"{o}ckenhauer, D.~Komm, R.~Královič, R.~Královič, and T.~M\"{o}mke.
\newblock On the advice complexity of online problems.
\newblock In \emph{Algorithms and Computation}, page 331–340. Springer Berlin
  Heidelberg, 2009.
\newblock ISBN 9783642106316.
\newblock \doi{10.1007/978-3-642-10631-6_35}.

\bibitem[Boyar et~al.(2017)Boyar, Favrholdt, Kudahl, Larsen, and
  Mikkelsen]{BFKLM}
J.~Boyar, L.~M. Favrholdt, C.~Kudahl, K.~S. Larsen, and J.~W. Mikkelsen.
\newblock Online algorithms with advice: A survey.
\newblock \emph{ACM Computing Surveys}, 50\penalty0 (2):\penalty0 1–34, Apr.
  2017.
\newblock ISSN 1557-7341.
\newblock \doi{10.1145/3056461}.

\bibitem[Dobrev et~al.(2009)Dobrev, Královič, and Pardubská]{DKP}
S.~Dobrev, R.~Královič, and D.~Pardubská.
\newblock Measuring the problem-relevant information in input.
\newblock \emph{RAIRO - Theoretical Informatics and Applications}, 43\penalty0
  (3):\penalty0 585–613, Apr. 2009.
\newblock ISSN 1290-385X.
\newblock \doi{10.1051/ita/2009012}.

\bibitem[Gupta and Lewi(2012)]{GL}
A.~Gupta and K.~Lewi.
\newblock The online metric matching problem for doubling metrics.
\newblock In \emph{Automata, Languages, and Programming}, page 424–435.
  Springer Berlin Heidelberg, 2012.
\newblock ISBN 9783642315947.
\newblock \doi{10.1007/978-3-642-31594-7_36}.

\bibitem[Kalyanasundaram and Pruhs(1993)]{KP}
B.~Kalyanasundaram and K.~Pruhs.
\newblock Online weighted matching.
\newblock \emph{Journal of Algorithms}, 14\penalty0 (3):\penalty0 478–488,
  May 1993.
\newblock ISSN 0196-6774.
\newblock \doi{10.1006/jagm.1993.1026}.

\bibitem[Khuller et~al.(1994)Khuller, Mitchell, and Vazirani]{KMV}
S.~Khuller, S.~G. Mitchell, and V.~V. Vazirani.
\newblock On-line algorithms for weighted bipartite matching and stable
  marriages.
\newblock \emph{Theoretical Computer Science}, 127\penalty0 (2):\penalty0
  255–267, May 1994.
\newblock ISSN 0304-3975.
\newblock \doi{10.1016/0304-3975(94)90042-6}.

\bibitem[Mikkelsen(2016)]{Mik}
J.~W. Mikkelsen.
\newblock {Randomization Can Be as Helpful as a Glimpse of the Future in Online
  Computation}.
\newblock In I.~Chatzigiannakis, M.~Mitzenmacher, Y.~Rabani, and D.~Sangiorgi,
  editors, \emph{43rd International Colloquium on Automata, Languages, and
  Programming (ICALP 2016)}, volume~55 of \emph{Leibniz International
  Proceedings in Informatics (LIPIcs)}, pages 39:1--39:14, Dagstuhl, Germany,
  2016. Schloss Dagstuhl -- Leibniz-Zentrum f{\"u}r Informatik.
\newblock ISBN 978-3-95977-013-2.
\newblock \doi{10.4230/LIPIcs.ICALP.2016.39}.

\bibitem[Peserico and Scquizzato(2023)]{PS}
E.~Peserico and M.~Scquizzato.
\newblock Matching on the line admits no \(o(\sqrt {\log n})\) -competitive
  algorithm.
\newblock \emph{ACM Transactions on Algorithms}, 19\penalty0 (3):\penalty0
  1–4, July 2023.
\newblock ISSN 1549-6333.
\newblock \doi{10.1145/3594873}.

\end{thebibliography}

\end{document}